\documentclass[journal,twoside,web]{ieeecolor}

\usepackage{generic}
\usepackage{cite}
\usepackage{amsmath,amssymb,amsfonts}
\usepackage{algorithmic}
\usepackage{graphicx}
\usepackage{textcomp}
\usepackage{booktabs}
\usepackage{lcsys}

\newtheorem{theorem}{Theorem}[section]
\newtheorem{lemma}[theorem]{Lemma}

\newtheorem{assumption}[theorem]{Assumption}
 
\DeclareMathOperator*{\argmin}{argmin}
\DeclareMathOperator*{\minimize}{minimize}

\newcommand{\1}{\mathbf{1}}
 
\begin{document}

\def\BibTeX{{\rm B\kern-.05em{\sc i\kern-.025em b}\kern-.08em
    T\kern-.1667em\lower.7ex\hbox{E}\kern-.125emX}}

\title{Active Sensing and Deferred-Decision Trajectory Optimization for Robust Target Identification}

\author{Farbod Siahkali$^{a}$,~\IEEEmembership{Graduate Student Member,~IEEE,}
Mengxue Hou$^{b}$,~\IEEEmembership{Member,~IEEE,} and
Vijay Gupta$^{a}$,~\IEEEmembership{Fellow,~IEEE}%
\thanks{This work was supported in part by the U.S. Army Research Office under Grant 13001664.}
\thanks{$^{a}$Farbod Siahkali and Vijay Gupta are with the Department of Electrical and Computer Engineering, Purdue University, West Lafayette, IN 47906, USA (Emails: \texttt{siahkali@purdue.edu}, \texttt{gupta869@purdue.edu}).}%
\thanks{$^{b}$Mengxue Hou is with the Department of Electrical Engineering, University of Notre Dame, Notre Dame, IN 46556, USA (Email: \texttt{mhou@nd.edu}).}%
}

\maketitle
\thispagestyle{empty}

\begin{abstract}
We study trajectory optimization in mobile sensing systems that must identify which member of a finite candidate set is the true target, while maintaining reachability to all potential candidate targets, under resource constraints. Deferred-Decision Trajectory Optimization (DDTO) addresses this setting by computing trajectories that reach individual targets but remain coincident for as long as possible before separating toward different targets. We propose Active-Sensing DDTO (AS-DDTO), which extends DDTO by adding a trajectory-dependent information-acquisition term to the planning objective. The resulting planner maintains reachability to candidate targets while biasing the coincident portion of the trajectories toward regions that enable earlier target identification. The framework supports Bayesian updates and conformal candidate-set updates for distance-dependent sensing. We derive a mixed-integer conic reformulation and provide guarantees on recursive feasibility, belief concentration, and fixed-time coverage for the raw conformal candidate set. Numerical simulations show improved target identification compared with standard DDTO under distance-dependent sensing uncertainty and limited sensing budget.
\end{abstract}

\begin{IEEEkeywords}
Active sensing, deferred-decision trajectory optimization, conformal prediction.
\end{IEEEkeywords}

\section{Introduction}
\label{sec:intro}

\IEEEPARstart{A}{utonomous} systems operating in uncertain environments must often plan trajectories while the identity of the true target within a finite candidate set remains ambiguous until further discriminative information is gathered. For example, a drone detecting multiple heat signatures must determine which candidate corresponds to a survivor while ensuring timely reachability to all plausible candidates under a limited energy budget. Either premature commitment to an incorrect candidate or excessive deferral may lead to exhausting resources, such as fuel, before the drone reaches the survivor.

While many methods exist for efficient trajectory planning~\cite{Placed2023Survey, Zwick2023sensor}, they typically assume that the goal or target is known. When the target identity is uncertain, standard trajectory optimization can lead to premature commitment to an incorrect candidate. Deferred-Decision Trajectory Optimization (DDTO) addresses this challenge by computing individual trajectories that reach each candidate such that the time duration for which these trajectories remain coincident before separating toward different candidates is maximized~\cite{Elango2022DDTO,Elango2025DDTO}. However, existing DDTO formulations implicitly assume that the target identity is revealed independently of the trajectory executed by the system. For mobile sensors, measurement quality and the rate at which uncertainty can be reduced depend on the trajectory. A trajectory designed without this dependence in mind and optimized solely for deferral may fail to generate informative measurements quickly enough to resolve ambiguity about the target before branching is forced, e.g., by fuel constraints.

Active sensing (AS) designs trajectories to maximize information gain subject to dynamics and constraints~\cite{Le2009Trajectory,Buisson2020Actively,Liu2024Active,Liu2023Joint,kreucher2005sensor}. However, these formulations typically assume a single target. As a result, they do not enforce reachability to multiple candidate targets. Related paradigms such as belief-space planning or information-regularized MPC can couple estimation and control, but they generally lack DDTO's explicit trunk-branch structure, in which candidate-specific trajectories are constrained to remain coincident for as long as possible before separating toward different candidate targets.

This paper integrates active sensing with DDTO by modeling the dual role of the trajectory. A trajectory must simultaneously (i) maintain reachability to multiple candidate targets under resource constraints, and (ii) acquire informative measurements that help identify the true target. Integrating these paradigms is challenging, as it requires combining the nonconvex deferral-maximization objectives of DDTO with information-based surrogates under sensing and resource budgets. We study both (i) parametric Gaussian sensing with Bayesian belief updates, and (ii) a distribution-free setting. For the latter, we leverage distribution-free conformal prediction (CP), which provides candidate sets with finite-sample marginal coverage at any fixed time without assumptions on the noise distribution~\cite{angelopoulos2023conformal,sun2023conformal,Romano2019Conformalized}. To accommodate the trajectory-dependent nature of sensing, we employ sequential aggregation of conformal prediction~\cite{kaur2022codit,lindemann2023safe}. 

The main contributions of this paper are as follows. First, we propose AS-DDTO, which extends DDTO by adding a trajectory-dependent information-acquisition term while maintaining reachability to multiple candidate targets. Existing DDTO formulations are \textit{information-passive}: they assume that the target identity is revealed through an exogenous process. In contrast, our framework treats the trajectory as an active sensing control, so that the coincident trajectory segment is biased toward regions that improve measurement informativeness for earlier target resolution. We support Bayesian belief updates under a Gaussian sensing model and conformal candidate-set updates under a distribution-free sensing model. We derive mixed-integer conic reformulations that enable practical solution of the resulting trajectory optimization problems. Finally, we establish theoretical guarantees including recursive feasibility, posterior concentration for the Bayesian model, and finite-sample coverage for the conformal prediction setting.

\section{Problem Formulation}
\label{sec:prob}
We consider a mobile sensor that aims to reach the true target, whose identity is a priori unknown within a finite candidate set, while satisfying dynamics and a resource budget. Target identity is inferred online from noisy, distance-dependent measurements.
Let the sensor dynamics be $x_{k+1} = f(x_k, u_k)$, where $x_{k}\in\mathcal{X}\subseteq\mathbb{R}^{n_x}$ is the state and $u_{k}\in\mathcal{U}\subseteq\mathbb{R}^{n_u}$ is the control input. The initial condition $x_{0}$ is known, and the sensor operates over a finite horizon of length $H$. Every feasible trajectory must satisfy the constraint $\sum_{k=0}^{H-1} c(x_k,u_k) \le c_{\max}$, where $c(x_k,u_k)\ge 0$ is the stage cost.

Let $\mathcal{J} := \{1,\dots,N\}$ index the target candidates. Exactly one candidate is the true target, with an unknown index $j^\star \in \mathcal{J}$. 
The $j$-th candidate has a fixed state $z^j$ and an associated terminal set
$\mathcal{Z}^j \subset \mathcal{X}$ satisfying $z^j \in \mathcal{Z}^j$. Reaching candidate $j$ is equivalent to reaching $\mathcal{Z}^j$. If this set is reached early, we assume that feasibility can be maintained thereafter, e.g., by a terminal invariant set. Thus, reaching candidate $j$ is enforced by $x_H\in\mathcal{Z}^j$.
We assume that the position of the sensor (respectively, candidate $j$) is given by $E x_k$ (respectively, $E z^j$) for a known matrix $E$ of appropriate dimensions. The distance between the sensor and candidate $j$ at time $k$ is $d_k(j) := \| E x_k - E z^j \|_2$.
We let $D_{\max}>0$ denote a known upper bound on admissible sensing distances, so that $d_k(j)\le D_{\max}$ for all feasible sensor states and candidates.

Depending on the measurement model, the information state is either a Bayesian belief $\pi_k(j) := \Pr(j^\star = j \mid \mathcal{F}_k)$ for $j \in \mathcal{J}$, where $\mathcal{F}_k$ denotes the information available up to time $k$, or a candidate set. We assume a uniform prior $\pi_0(j)=1/N$. 
At time $k$, the sensor queries $\mathcal{Q}_k\subseteq\mathcal{J}$ with $|\mathcal{Q}_k|\le L$ and receives measurements $\{m_k^j\}_{j\in\mathcal{Q}_k}$ whose informativeness decreases with sensing distance $d_k(j)$. The Bayesian and distribution-free measurement models are given in Section~\ref{subsec:models}.

DDTO maintains reachability to multiple candidate terminal sets by maximizing the time duration over which the candidate-specific trajectories remain coincident before separating toward different terminal sets.
Let $\mathbb{X}^j := \{x_0^j,x_1^j,\dots,x_H^j\}$ denote the resulting state trajectory associated with candidate $j$, with corresponding control inputs $\mathbb{U}^j := \{u_0^j,\dots,u_{H-1}^j\}$. 
All trajectories share the initial condition $x_0^j = x_0$ and satisfy the terminal reachability condition $x_H^j \in \mathcal{Z}^j$.
Define the pseudo-norm $\|x\|_\blacklozenge = 1$ if $x \neq 0$ and $0$ otherwise. The optimization problem solved in DDTO is:
{\small \begin{subequations}
\begin{align}
&\minimize_{\{\mathbb{X}^j,\mathbb{U}^j\}_{j\in\mathcal{J}}}\quad
 \min_{r \in \mathcal{J}}
\sum_{\substack{j\in\mathcal{J}\\ j\neq r}} \sum_{k=1}^H \|x_k^r - x_k^j\|_\blacklozenge
\\
\text{s.t.}\quad
& x_{k+1}^j = f(x_k^j,u_k^j), \quad \forall j\in\mathcal{J},\ \forall k=0,\dots,H-1,
\label{ddtoconst:dynamic}\\
& x_k^j \in \mathcal{X},\ \ u_k^j \in \mathcal{U}, \quad \forall j\in\mathcal{J},\ \forall k=0,\dots,H-1,\\
& x_0^j = x_0,\ \ x_H^j \in \mathcal{Z}^j, \quad \forall j\in\mathcal{J}, \label{ddtoconst:terminal}\\
& \sum_{k=0}^{H-1} c(x_k^j,u_k^j) \le c_{\max}, \quad \forall j\in\mathcal{J}.
\label{ddtoconst:cost}
\end{align}
\end{subequations}}

The inner minimization is evaluated by enumerating $r\in\mathcal{J}$ and selecting the best solution~\cite{Elango2022DDTO,Elango2025DDTO}.
The objective is to design a trajectory that maintains reachability to all plausible candidate targets for as long as possible while collecting informative measurements that enable earlier target identification. To this end, we extend DDTO to settings in which target identity is inferred online from noisy, distance-dependent measurements by designing trajectories that defer commitment and promote early information acquisition. An information-augmented formulation and a tractable reformulation are presented in Sections~\ref{sec:method-overview} and~\ref{subsec:reform}, respectively.

\section{Active-Sensing DDTO}
\label{sec:asddto}

We now formalize the measurement models and our objective. The proposed formulation augments DDTO in two ways: (i) measurements whose informativeness varies with the distance between the sensor and a candidate are used to update either a Bayesian belief or a distribution-free candidate set; and (ii) a distance-based information surrogate is incorporated into the planning objective.

\subsection{Measurement Models}
\label{subsec:models}

At time $k$, the sensor queries $\mathcal{Q}_k \subseteq \mathcal{J}$ subject to $|\mathcal{Q}_k|\le L$, and receives measurements whose signal-to-noise ratio degrades with distance. We discuss two cases: when the measurements are used to update a Bayesian posterior over candidates, and when they are used to maintain a distribution-free candidate set with coverage guarantees. We make the following assumption, which is standard for multi-query sensing systems~\cite{angelopoulos2023conformal,Toccaceli2017Combination} and allows the likelihood ratio to factor across the queried channels.

\begin{assumption}
\label{as:measure_clean}
At time $k$, conditioned on $(x_k,j^\star)$, the measurements $\{m_k^j\}_{j\in\mathcal{Q}_k}$ are independent across the queried candidates.
\end{assumption}

\begin{assumption}
\label{as:time_indep}
Conditioned on $(j^\star,\{x_k\}_{k\ge 0},\{\mathcal{Q}_k\}_{k\ge 0})$, the measurement collections
$\{\{m_k^j\}_{j\in\mathcal{Q}_k}\}_{k\ge 0}$ are independent across time.
\end{assumption}

\paragraph{Parametric Gaussian model}
We use a distance-dependent Gaussian model in which measurements become both weaker and noisier as the sensor moves farther from a candidate.
Let $\gamma(d):=\exp(-\xi d)$ with $\xi>0$, and define the distance-dependent variance
$\sigma^2(d):=\sigma_0^2 + \big(1-\gamma(d)^2\big)\sigma_1^2$.
For each queried $j \in \mathcal{Q}_k$,
\begin{equation}
m_k^j \mid (x_k,j^\star)\sim
\begin{cases}
\mathcal{N}\!\big(b\,\gamma(d_k(j)),\,\sigma^2(d_k(j))\big), & j=j^\star,\\
\mathcal{N}\!\big(0,\,\sigma^2(d_k(j))\big), & j\neq j^\star,
\end{cases}
\label{eq:gauss_scalar}
\end{equation}
where $b>0$. Thus only the queried true-target channel has nonzero mean. The Bayesian update is
\begin{equation}
\label{eq:posterior_update_bayesian}
\tilde{\pi}_{k+1}(j):=\pi_k(j)\,R_k(j), \;
\pi_{k+1}(j)=\frac{\tilde{\pi}_{k+1}(j)}{\sum_{r\in\mathcal{J}}\tilde{\pi}_{k+1}(r)},
\end{equation}
where the likelihood-ratio contribution is
\[
R_k(j):=
\exp\!\Big(
\frac{b\,\gamma(d_k(j))}{\sigma^2(d_k(j))}\,m_k^j
-\frac{b^2\,\gamma(d_k(j))^2}{2\,\sigma^2(d_k(j))}
\Big), \quad j\in\mathcal{Q}_k,
\]
and $R_k(j)=1$ if $j\notin\mathcal{Q}_k$.

\paragraph{Conformal prediction (distribution-free)}
In settings where a parametric measurement model is unavailable, we use conformal prediction to maintain a candidate set with finite-sample marginal coverage at each fixed time. Following the usual practice in conformal prediction~\cite{angelopoulos2023conformal}, let a black-box predictor $g_\theta:\mathcal{M}\to[0,1]$ map measurements to a ``target-likeness'' prediction, where $\mathcal{M}$ denotes the measurement space. Define the nonconformity score $s(m):=1-g_\theta(m)$. Using calibration data collected from the target at known ranges, we discretize $[0,D_{\max}]$ into bins $B_b=[D_{b-1},D_b)$ and, for each bin, store a multiset $\mathcal{E}_b$ of calibration nonconformity scores of size $n_b$. The following exchangeability assumption is standard in conformal prediction literature~\cite{angelopoulos2023conformal,Romano2019Conformalized}.
\begin{assumption}
\label{ass:exchangeability_clean}
If $j^\star=j$ and $d_k(j)\in B_b$, then $s(m_k^j)$ is exchangeable with the elements of $\mathcal{E}_b$ 
(i.e., the joint distribution of $\big(s(m_k^j), \mathcal{E}_b\big)$ is invariant to any permutation of its elements).
\end{assumption}

If $j\in\mathcal{Q}_k$ and $d_k(j)\in B_b$, we define the conformal $p$-value for the null hypothesis $H_j:\, j^\star=j$ as
\begin{equation}
p_k(j):=\frac{1+\sum_{e\in\mathcal{E}_b}\mathbf{1}\{s(m_k^j)\le e\}}{n_b+1}.
\label{eq:cp_pvalue_clean}
\end{equation}
To aggregate evidence across time for a candidate, we combine the sequential $p$-values using Fisher's method. Define the Fisher statistic
$T_k(j):=-2\sum_{t\le k:\,j\in\mathcal{Q}_t}\log p_t(j)$ 
and define the associated combined $p$-value
$P_k(j):=1-F_{\chi^2_{2n_k(j)}}(T_k(j))$ (with $P_k(j)=1$ if $n_k(j)=0$), where $n_k(j)$ is the number of times $j$ has been queried up to $k$.
We then define the non-rejected candidate set at level $\alpha \in (0,1)$ as
\begin{equation}
\mathcal{C}_k(\alpha):=\{j\in\mathcal{J}\mid P_k(j)\ge \alpha\}.
\label{eq:confset_clean}
\end{equation}
Here, $\alpha$ denotes the miscoverage level, i.e., the maximum probability of excluding the target from the set. 
In our context, $\mathcal{C}_k(\alpha)$ is the raw non-rejected candidate set. Here, ``raw'' means that the set is obtained directly from the current combined conformal $p$-values, without enforcing monotonicity across time. Its elements are the candidates whose null hypotheses $H_j$ are not rejected by time $k$. Theorem~\ref{thm:cp_coverage} gives fixed-time marginal coverage for this raw set at level $1-\alpha$.

\subsection{Planning Objective and Online Replanning}
\label{sec:method-overview}

We aim to bias the shared trajectory segment toward informative regions while pruning unlikely candidates. Given $\tau\in(0,1]$, define the retained set
\begin{equation}
\mathcal{J}_k^\tau:=\Big\{j\in\mathcal{J}\ \Big|\ \pi_k(j)\ge \tfrac{\tau}{N}\Big\}.
\label{eq:prune_tau}
\end{equation}
Fix a convex, nondecreasing function $\phi:[0,D_{\max}]\to\mathbb{R}_+$, a discount factor $\rho\in(0,1]$, and a weight $w \ge 0$. The {\em offline} active-sensing DDTO (AS-DDTO) problem is
{\small \begin{subequations}\label{eq:ddto-aug}
\begin{align}
\mathrm{P}(0):=
&\minimize_{\{\mathbb{X}^{j},\mathbb{U}^{j}\}_{j\in\mathcal{J}_0^\tau}}\;
\min_{r\in\mathcal{J}_0^\tau} \underbrace{\sum_{j \neq r}\sum_{k=1}^H \big\|x_k^j-x_k^r\big\|_\blacklozenge}_{\textrm{Deferral objective}} \nonumber\\
& + w\underbrace{\sum_{k=1}^H\rho^{k-1}
\sum_{j\in\mathcal{J}_0^\tau}\pi_0(j)\,
\phi \big( d(x_k^r,z^j) \big)}_{\textrm{Active sensing surrogate}} 
\label{eq:ddto-aug-obj}\\
\text{s.t.}\quad &
\eqref{ddtoconst:dynamic}-\eqref{ddtoconst:cost}
\quad\text{(restricted to } j \in \mathcal{J}_0^\tau\text{)},
\end{align}
\end{subequations}}
where $d(x_k^r,z^j) := \| Ex_k^r - Ez^j \|_2$.

The first term promotes deferral, while the second biases the shared segment toward likely candidates before branching. We use $\phi(d)=d$ or $d^2$ as tractable convex proximity surrogates for distance-dependent informativeness. Distance-based informativeness surrogates are standard in mobile sensing, where sensing quality degrades with sensor--target distance~\cite{Cortes2004coverage}, detection probability and uncertainty reduction depend on distance~\cite{cassandras2012optimal}, and finite-target monitoring uses limited sensing range with range-dependent SNR~\cite{pinto2019optimal}. The weight $w$ controls the tradeoff, with $w=0$ recovering DDTO.

In the offline variant, we execute the open-loop plan without replanning, while updating (i) the Bayesian belief via~\eqref{eq:posterior_update_bayesian} in the Gaussian case or (ii) the conformal candidate set via~\eqref{eq:confset_clean} in the distribution-free case.

\paragraph*{Online replanning}
After executing one control step at time $s$, we update the information state and the retained set used for planning:
(i) Gaussian: update the posterior $\pi_s$ via~\eqref{eq:posterior_update_bayesian} and form $\mathcal{J}_s^\tau$ via~\eqref{eq:prune_tau};
(ii) CP: update the conformal set $\mathcal{C}_s(\alpha)$ via~\eqref{eq:confset_clean}.
To ensure recursive feasibility under pruning, we maintain nested backup sets
$\widehat{\mathcal{J}}_s^\tau:=\widehat{\mathcal{J}}_{s-1}^\tau\cap\mathcal{J}_s^\tau$ and
$\widehat{\mathcal{C}}_s(\alpha):=\widehat{\mathcal{C}}_{s-1}(\alpha)\cap\mathcal{C}_s(\alpha)$.
At time $s$, we first solve a shrinking-horizon problem $P(s)$ over the raw retained set, meaning $\mathcal{J}_s^\tau$ in the Gaussian case and $\mathcal{C}_s(\alpha)$ in the CP case, with $x_0 \leftarrow x_s$, horizon length $H \leftarrow H-s$ (re-indexing the summations), and remaining budget $c_{\mathrm{rem}}(s):=c_{\max}-\sum_{t=0}^{s-1} c(x_t,u_t)$ replacing $c_{\max}$ in \eqref{ddtoconst:cost}. If planning based on the raw set $\mathcal{C}_s(\alpha)$ or $\mathcal{J}_s^\tau$ is infeasible or fails numerically, we re-solve over the corresponding nested backup set, $\widehat{\mathcal{C}}_s(\alpha)$ or $\widehat{\mathcal{J}}_s^\tau$. 
The nested set is non-expanding, so the truncated tail of the most recent feasible plan is feasible for the backup problem by Theorem~\ref{thm:prune_tail_rev}. If the nested re-solve also fails numerically, the controller executes this tail directly as an implementation fallback.

For online DDTO, if the candidate set (Gaussian: $\mathcal{J}_s^\tau$; CP: $\mathcal{C}_s(\alpha)$) is unchanged, we reuse the truncated tail of the previously computed optimal plan for the shrinking-horizon problem.
AS-DDTO is re-solved because the information term depends on the current information state:
in the Gaussian case, we set $\pi_0 \leftarrow \pi_s$, and in the CP case, we replace $\pi_0(j)$ by the normalized combined $p$-values.

In the Gaussian case, we query the candidates with beliefs closest to $1/N$, i.e.
$\mathcal{Q}_k \in \argmin_{\substack{\mathcal{Q}\subseteq\mathcal{J}_k^\tau}}
\sum_{j\in\mathcal{Q}}\Big|\pi_k(j)-\tfrac{1}{N}\Big|
$ subject to $|\mathcal{Q}|=\min\{L,|\mathcal{J}_k^\tau|\}.$
In the CP case, we query candidates whose combined $p$-values are closest to the rejection threshold
$\mathcal{Q}_k \in \argmin_{\substack{\mathcal{Q}\subseteq \mathcal{C}_k(\alpha)}}
\sum_{j\in\mathcal{Q}}\big(P_k(j)-\alpha\big),$
s.t. $|\mathcal{Q}|=\min\{L,|\mathcal{C}_k(\alpha)|\}$, 
which prioritizes borderline candidates whose hypotheses are most uncertain.

\subsection{Mixed-Integer Conic Reformulation}
\label{subsec:reform}

For affine dynamics and conic-representable constraints, \eqref{eq:ddto-aug} admits an MI(SO)CP reformulation. Nonconvexities arise from the pseudo-norm term and the inner minimization over the reference index. Following~\cite{Elango2025DDTO}, we derive a tractable mixed-integer conic reformulation.
Fix $r\in\mathcal{J}_0^\tau$ with $\phi(d)=d^2$. Introduce binary $\zeta_k^i\in\{0,1\}$ for $i\neq r$ and impose
\begin{equation}
\|x_k^i-x_k^r\|_2\le M\,\zeta_k^i,\qquad k=1,\dots,H,
\label{eq:branch-bigM}
\end{equation}
with sufficiently large $M>0$. Introduce $\delta_{k,j}\ge 0$ satisfying
{\small \begin{equation}
\|E x_k^r-E z^j\|_2\le \delta_{k,j}\le D_{\max},
\quad k=1,\dots,H,\ \forall j \in \mathcal{J}_0^\tau.
\label{eq:dist-SOC}
\end{equation}}For a fixed $r$, introducing $\{\zeta_k^i\}$ and $\{\delta_{k,j}\}$ yields a mixed-integer conic reformulation of~\eqref{eq:ddto-aug}:
\begin{subequations}\label{eq:miqcp-aug-ddto-d2}
\begin{align}
\minimize
&\sum_{i\neq r}\sum_{k=1}^H \zeta_k^i
+w\sum_{k=1}^H \rho^{k-1}
\sum_{j\in\mathcal{J}_0^\tau}\pi_0(j)\,\delta_{k,j}^2 
\label{optreformed}
\\
\text{s.t.}\quad
&\eqref{ddtoconst:dynamic}\text{--}\eqref{ddtoconst:cost}, \ \eqref{eq:branch-bigM},\ \eqref{eq:dist-SOC}, \nonumber \\
&\zeta_k^i\in\{0,1\},\ \delta_{k,j}\ge 0.
\end{align}
\end{subequations}
We solve~\eqref{eq:ddto-aug} by enumerating $r\in\mathcal{J}_0^\tau$, solving the problem above for each $r$, and selecting the solution with the smallest objective value.
When $\phi(d)=d$, replacing $\delta_{k,j}^2$ by $\delta_{k,j}$ yields an MISOCP with the same SOC constraints \eqref{eq:dist-SOC}.

\section{Analysis}
\label{sec:theory}

\subsection{Recursive Feasibility}
A key requirement for the proposed online scheme to be useful is recursive feasibility: if a feasible plan exists at time $k-1$, then after executing one step and updating the retained candidate set, a feasible plan continues to exist at time $k$. We formalize this property for the shrinking-horizon version~\eqref{eq:ddto-aug}. 

\begin{theorem}
\label{thm:prune_tail_rev}
In the online case, define the remaining budget $c_{\mathrm{rem}}(k) := c_{\max}-\sum_{t=0}^{k-1} c(x_t,u_t)$, where $(x_t,u_t)$ is the executed closed-loop state and input up to time $k-1$.
Suppose that at time $k-1$ there exists a feasible solution $\{x_t^i,u_t^i\}_{t=k-1}^{H-1}$
for all $i\in \mathcal{J}_{k-1}^\tau$ satisfying constraints \eqref{ddtoconst:dynamic}--\eqref{ddtoconst:terminal} and $\sum_{t=k-1}^{H-1} c(x_t^i,u_t^i)\le c_{\mathrm{rem}}(k-1)$.
If
$
\mathcal{J}_k^\tau \subseteq \mathcal{J}_{k-1}^\tau,
$
then the truncated trajectories $\{x_t^i,u_t^i\}_{t=k}^{H-1}$ are feasible at time $k$.
\end{theorem}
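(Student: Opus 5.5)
The plan is to verify directly that the truncated trajectories satisfy every constraint of the shrinking-horizon problem $P(k)$ over the index set $\mathcal{J}_k^\tau$. The key modeling fact is that in closed loop the executed step at time $k-1$ is the shared first step of the previous plan. Since all candidate trajectories start from the common state $x_{k-1}$ and the executed input is $u_{k-1}=u_{k-1}^r$ for the reference branch (with $u_{k-1}^i=u_{k-1}^r$ for every $i$ still coincident at $k$), we have $x_k=f(x_{k-1},u_{k-1})=x_k^i$. I would first make this precise. If some branches have already separated at step $k$, they belong to candidates the plan commits away from, and I expect the argument to rely on the retained set at time $k$ containing only indices whose planned states agree with the executed one at time $k$. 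Identifying exactly which indices share the executed prefix is the step I expect to be the main obstacle. I would handle it by noting that the replanning scheme executes the common (trunk) input, and, if necessary, by stating that the truncation is taken of the plan whose first state coincides with $x_k$.

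With $x_k^i=x_k$ for $i\in\mathcal{J}_k^\tau$, I would check the constraints of $P(k)$ one by one, restricted to $i\in\mathcal{J}_k^\tau\subseteq\mathcal{J}_{k-1}^\tau$. The inclusion guarantees that every index needing a trajectory at time $k$ already had one at time $k-1$. The initial condition $x_k^i=x_k$ holds by the previous step. The dynamics \eqref{ddtoconst:dynamic} and the state/input constraints for $t=k,\dots,H-1$ are a subset of those already satisfied for $t=k-1,\dots,H-1$. The terminal condition $x_H^i\in\mathcal{Z}^i$ in \eqref{ddtoconst:terminal} is unchanged because the terminal time $H$ is fixed under the shrinking horizon.

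For the budget, I would use the recursion $c_{\mathrm{rem}}(k)=c_{\mathrm{rem}}(k-1)-c(x_{k-1},u_{k-1})$ together with $(x_{k-1},u_{k-1})=(x_{k-1}^i,u_{k-1}^i)$. This gives
\[
\sum_{t=k}^{H-1}c(x_t^i,u_t^i)=\sum_{t=k-1}^{H-1}c(x_t^i,u_t^i)-c(x_{k-1},u_{k-1})\le c_{\mathrm{rem}}(k-1)-c(x_{k-1},u_{k-1})=c_{\mathrm{rem}}(k),
\]
which is \eqref{ddtoconst:cost} with the remaining budget.

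Finally, I would note two points. First, the objective of $P(k)$ (deferral plus information terms, with any reference $r\in\mathcal{J}_k^\tau$) imposes no constraints, so the truncated tail is a feasible point regardless of the updated $\pi_k$. Second, the same argument applies verbatim to the nested backup sets $\widehat{\mathcal{J}}_k^\tau$ and $\widehat{\mathcal{C}}_k(\alpha)$, which are non-expanding by construction. This is the use the paper makes of the result in the online replanning scheme.
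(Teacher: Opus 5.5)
Your proposal is correct and follows the same route as the paper's proof. Both restrict to $i\in\mathcal{J}_k^\tau\subseteq\mathcal{J}_{k-1}^\tau$, let the tail inherit the dynamics, state/input and terminal constraints, get the budget bound by subtracting the executed stage cost, and note that dropping candidates only relaxes the problem. Your explicit condition $(x_{k-1}^i,u_{k-1}^i)=(x_{k-1},u_{k-1})$, hence $x_k^i=x_k$, for every retained $i$ is what the paper uses tacitly for the initial condition and for the step $c_{\mathrm{rem}}(k-1)-c(x_{k-1}^i,u_{k-1}^i)=c_{\mathrm{rem}}(k)$; the paper's appeal to $c\ge 0$ does not supply it. Stating it as a hypothesis is the right fix. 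Your fallback of truncating ``the plan whose first state coincides with $x_k$'' would not rescue a candidate whose branch has already diverged.
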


\begin{proof}
Fix $i\in\mathcal{J}_k^\tau\subseteq\mathcal{J}_{k-1}^\tau$. The truncated trajectory inherits the dynamics, state/input, and terminal constraints \eqref{ddtoconst:dynamic}--\eqref{ddtoconst:terminal}.
Since $c(\cdot,\cdot)\ge 0$, we have
$
\sum_{t=k}^{H-1} c(x_t^i,u_t^i)
= \sum_{t=k-1}^{H-1} c(x_t^i,u_t^i) - c(x_{k-1}^i,u_{k-1}^i)
\le c_{\mathrm{rem}}(k)$.
Dropping candidates only relaxes the problem.
\end{proof}

The same tail-feasibility argument applies to any nested retained set, including $\widehat{\mathcal{C}}_k(\alpha)$.

\subsection{Belief concentration and pruning (Gaussian model)}
\label{subsec:gauss_theory}

We study posterior concentration under the Gaussian model.
Fix the target index $j^\star$ and a false hypothesis $j\neq j^\star$. Define the distance-dependent signal-to-noise quantity
\begin{equation}
\label{eq:psi_def}
\psi(d):=\frac{b^2\,\gamma(d)^2}{\sigma^2(d)},\qquad d\in[0,D_{\max}].
\end{equation}
For a given $\mathcal{Q}_k$, define the per-step pairwise KL increment
{\small \begin{equation}
\label{eq:Delta_def}
\Delta_k(j):=\frac12\Big(
\mathbf{1}\{j^\star\in\mathcal{Q}_k\}\,\psi\big(d_k(j^\star)\big)
+\mathbf{1}\{j\in\mathcal{Q}_k\}\,\psi\big(d_k(j)\big)
\Big).
\end{equation}}Let the accumulated pairwise information be $\Gamma_k(j):=\sum_{t=0}^{k-1}\Delta_t(j)$.
Under the Gaussian model, $\Delta_k(j)$ equals the conditional KL divergence between the time-$k$ measurement distributions under hypotheses $j^\star$ and $j$, given $(x_k,\mathcal{Q}_k)$.

\begin{lemma}
\label{lem:psi_monotone}
The function $\psi(d)$ is strictly decreasing in $d$.
Consequently, for fixed $\mathcal{Q}_k$, the KL increment $\Delta_k(j)$ in \eqref{eq:Delta_def} is nonincreasing in each queried distance $d_k(\cdot)$.
\end{lemma}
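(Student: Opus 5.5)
The plan is to write $\psi$ explicitly and show its numerator strictly decreases while its denominator strictly increases in $d$. With $\gamma(d)=e^{-\xi d}$,
\[
\psi(d)=\frac{b^2 e^{-2\xi d}}{\sigma_0^2+\sigma_1^2\left(1-e^{-2\xi d}\right)}.
\]
Since $\xi>0$ and $b>0$, the numerator $b^2e^{-2\xi d}$ is positive and strictly decreasing on $[0,D_{\max}]$. The denominator is $\sigma_0^2+\sigma_1^2(1-e^{-2\xi d})$. It is positive, because $\sigma_0^2>0$ is implicitly required for the model to be well defined at $d=0$. It is also nondecreasing in $d$, because $1-e^{-2\xi d}$ is increasing and $\sigma_1^2\ge 0$.

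A positive strictly decreasing function divided by a positive nondecreasing function is strictly decreasing, which gives the first claim. As a cross-check I would also differentiate. Writing $u=e^{-2\xi d}\in(0,1]$, we have $\psi=b^2u/(\sigma_0^2+\sigma_1^2-\sigma_1^2u)$, and
\[
\frac{d}{du}\,\frac{u}{A-\sigma_1^2u}=\frac{A}{(A-\sigma_1^2u)^2}>0, \qquad A:=\sigma_0^2+\sigma_1^2 .
\]
Since $du/dd=-2\xi u<0$, the chain rule gives $\psi'(d)<0$.

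For the consequence, fix $\mathcal{Q}_k$, so the indicators in \eqref{eq:Delta_def} are constants in $\{0,1\}$. Then $\Delta_k(j)$ is a nonnegative combination of $\psi(d_k(j^\star))$ and $\psi(d_k(j))$, each depending on a single distance. Each term is therefore nonincreasing in its own argument: it is strictly decreasing when the corresponding indicator is $1$ and constant when it is $0$. Increasing any one queried distance while holding the others fixed cannot increase $\Delta_k(j)$, which gives monotonicity coordinatewise in each $d_k(\cdot)$.

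The only real obstacle is bookkeeping. Strictness fails for unqueried channels, so the conclusion for $\Delta_k(j)$ must be stated as nonincreasing rather than strictly decreasing. The positivity assumption on $\sigma_0^2$, which keeps the denominator positive, should also be made explicit.
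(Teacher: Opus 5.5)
Your proof is correct and essentially the paper's. Your derivative cross-check is exactly the paper's chain-rule argument: you set $u=e^{-2\xi d}$, compute $\frac{d}{du}\bigl(u/(A-\sigma_1^2u)\bigr)=A/(A-\sigma_1^2u)^2>0$ (a step the paper leaves unstated), and use $du/dd<0$. Your primary numerator/denominator comparison is a calculus-free version of the same observation, and the extension to $\Delta_k(j)$ via fixed indicators is what the paper means by ``follows from \eqref{eq:Delta_def}''; the explicit $\sigma_0^2>0$ caveat is a reasonable addition.
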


\begin{proof}
Let $t=e^{-2\xi d}$ so $\psi(d)=b^2\,t/(\sigma_0^2+\sigma_1^2-\sigma_1^2 t)$.
Then $(d\psi/dd)=(d\psi/dt)(dt/dd)<0$ since $d\psi/dt>0$ and $dt/dd<0$.
The claim for $\Delta_k(j)$ follows from \eqref{eq:Delta_def}.
\end{proof}

Lemma~\ref{lem:psi_monotone} shows that, for fixed $\mathcal{Q}_k$, reducing queried distances increases the KL increment through $\psi(d)$. This supports using convex surrogates $\phi(d)=d$ or $\phi(d)=d^2$, which preserve conic representability and bias the planner toward informative regions, though they are not equivalent to maximizing mutual information.
More model-specific choices, such as an affine rescaling of $\psi(d)^{-1}$, can be handled via PWL epigraph approximation at the cost of a larger MIP~\cite{vielma2015mixed}.

Let $\mathbf{m}_k$ denote the stacked measurement vector of the queried channels at time $k$, and define the log-likelihood ratio
$\Lambda_k(j):=\log \big({p(\mathbf{m}_k\mid x_k,\mathcal{Q}_k,j^\star)} / {p(\mathbf{m}_k\mid x_k,\mathcal{Q}_k,j)} \big)$.
The next lemma characterizes the distribution of the per-step and cumulative log-likelihood ratios.

\begin{lemma}
\label{lem:llr_gauss}
Assume~\eqref{eq:gauss_scalar} and Assumptions~\ref{as:measure_clean}--\ref{as:time_indep}. Fix $j\neq j^\star$. Conditioned on $(x_k,\mathcal{Q}_k)$, $\Lambda_k(j)$ is Gaussian with
\begin{equation}
\label{eq:llr_mom}
\Lambda_k(j)\mid(x_k,\mathcal{Q}_k)\sim \mathcal{N}\!\big(\Delta_k(j),\,2\Delta_k(j)\big).
\end{equation}
Consequently, conditioned on $\{(x_t,\mathcal{Q}_t)\}_{t=0}^{k-1}$,
\begin{equation}
\label{eq:sum_llr_mom}
\sum_{t=0}^{k-1}\Lambda_t(j)\sim \mathcal{N}\!\big(\Gamma_k(j),\,2\Gamma_k(j)\big).
\end{equation}
\end{lemma}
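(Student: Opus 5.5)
The plan is to compute $\Lambda_k(j)$ explicitly as a sum of per-channel log-likelihood ratios. We then identify each channel's contribution as an affine function of a single Gaussian measurement, and add up the independent pieces, first across channels and then across time.

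\emph{Step 1 (factorization).} By Assumption~\ref{as:measure_clean}, conditioned on $(x_k,\mathcal{Q}_k)$ and a hypothesis, the density of $\mathbf{m}_k$ factors over $\ell\in\mathcal{Q}_k$. Hence $\Lambda_k(j)=\sum_{\ell\in\mathcal{Q}_k}\log\big(p(m_k^\ell\mid j^\star)/p(m_k^\ell\mid j)\big)$. Under \eqref{eq:gauss_scalar}, the variance $\sigma^2(d_k(\ell))$ depends only on the distance and not on the hypothesis. The mean of channel $\ell$ is nonzero only if $\ell$ is the hypothesized target. Therefore every channel $\ell\notin\{j^\star,j\}$ has identical laws under both hypotheses and contributes zero. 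Only the channels $j^\star$ and $j$, when queried, survive.

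\emph{Step 2 (per-channel moments).} Write $\mu_\ell:=b\gamma(d_k(\ell))$ and $\sigma_\ell^2:=\sigma^2(d_k(\ell))$. If $j^\star\in\mathcal{Q}_k$, that channel contributes $(\mu_{j^\star}m_k^{j^\star}-\mu_{j^\star}^2/2)/\sigma_{j^\star}^2$. Under the truth $m_k^{j^\star}\sim\mathcal{N}(\mu_{j^\star},\sigma_{j^\star}^2)$, so this term is Gaussian with mean $\psi(d_k(j^\star))/2$ and variance $\psi(d_k(j^\star))$. If $j\in\mathcal{Q}_k$, the contribution is $-(\mu_j m_k^j-\mu_j^2/2)/\sigma_j^2$. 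Under the truth $m_k^j\sim\mathcal{N}(0,\sigma_j^2)$, so this term is Gaussian with mean $\psi(d_k(j))/2$ and variance $\psi(d_k(j))$.

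The two surviving terms are independent by Assumption~\ref{as:measure_clean}. Their sum is therefore Gaussian with mean $\Delta_k(j)$ and variance $2\Delta_k(j)$ by \eqref{eq:Delta_def}, which is \eqref{eq:llr_mom}. The mean equals the KL divergence, as expected.

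\emph{Step 3 (summing over time).} For \eqref{eq:sum_llr_mom} we condition on the whole sequence $\{(x_t,\mathcal{Q}_t)\}_{t=0}^{k-1}$ together with $j^\star$. By Assumption~\ref{as:time_indep}, the measurement collections are then independent across $t$. Each $\Lambda_t(j)$ is a deterministic function of the time-$t$ collection given this conditioning. So the $\Lambda_t(j)$ are independent Gaussians with means $\Delta_t(j)$ and variances $2\Delta_t(j)$. Their sum is Gaussian with mean $\Gamma_k(j)$ and variance $2\Gamma_k(j)$.

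\emph{Main obstacle.} The delicate point is Step 3. In closed loop, $x_t$ and $\mathcal{Q}_t$ depend on past measurements, so unconditionally the summands are not independent Gaussians. I would handle this by stating that the distributional claim holds conditionally on the entire realized path and query sequence, exactly as Assumption~\ref{as:time_indep} is phrased. I would also note that for the conditional law of each $m_t$ we use the per-step model \eqref{eq:gauss_scalar}, which depends only on $(x_t,j^\star)$. A degenerate case needs a remark: if $\Delta_t(j)=0$, i.e.\ neither $j^\star$ nor $j$ is queried, the term is identically zero, which is read as the degenerate Gaussian $\mathcal{N}(0,0)$.
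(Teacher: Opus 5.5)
Your proof is correct and follows the paper's argument. The paper writes the equal-covariance Gaussian log-likelihood ratio in vector form, $(\mu^\star-\mu^j)^\top\Sigma^{-1}(\mathbf{m}_k-\tfrac12(\mu^\star+\mu^j))$, reads off mean $\tfrac12\|\mu^\star-\mu^j\|^2_{\Sigma^{-1}}$ and variance $\|\mu^\star-\mu^j\|^2_{\Sigma^{-1}}$, and then invokes Assumption~\ref{as:time_indep} for the sum over time; your channel-by-channel computation is the same thing with $\Sigma$ diagonal, and your remarks on closed-loop conditioning and the degenerate case $\Delta_t(j)=0$ are extra care the paper omits.
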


\begin{proof}
Conditioned on $(x_k,\mathcal{Q}_k)$, the stacked queried measurements are jointly Gaussian under both hypotheses $j^\star$ and $j$, with covariance $\Sigma$ and means $\mu^\star,\mu^j$ that differ only in the coordinates of $j^\star$ and $j$. For equal-covariance Gaussians,
$\Lambda_k(j)= (\mu^\star-\mu^j)^\top \Sigma^{-1}\!\left(\mathbf{m}_k-\tfrac12(\mu^\star+\mu^j)\right)$, so $\Lambda_k(j)$ is Gaussian. Under the true hypothesis, its mean equals the KL divergence
$\mathrm{KL}(\mathcal{N}(\mu^\star,\Sigma)\|\mathcal{N}(\mu^j,\Sigma))=\tfrac12\|\mu^\star-\mu^j\|_{\Sigma^{-1}}^2=\Delta_k(j)$, and its variance equals $\|\mu^\star-\mu^j\|_{\Sigma^{-1}}^2=2\Delta_k(j)$, yielding \eqref{eq:llr_mom}. Independence across time gives \eqref{eq:sum_llr_mom}.
\end{proof}

Bayes' rule yields the posterior odds
$ \log({\pi_k(j^\star)}/{\pi_k(j)})
=
\log({\pi_0(j^\star)}/{\pi_0(j)})
+\sum_{t=0}^{k-1}\Lambda_t(j)$.
Theorem~\ref{thm:map_bound} converts the LLR concentration into an exponential identification bound.

\begin{theorem}
\label{thm:map_bound}
Under the same conditions as in Lemma~\ref{lem:llr_gauss} and a uniform prior, let $\hat{j}_k \in \arg\max_{i\in\mathcal{J}}\pi_k(i)$. Then
\begin{equation}
\label{eq:map_err_bound}
\Pr\!\big(\hat{j}_k \neq j^\star\mid \{x_t,\mathcal{Q}_t\}_{t<k}\big)
\le \sum_{j\neq j^\star}\exp\!\Big(-\tfrac14\,\Gamma_k(j)\Big).
\end{equation}
\end{theorem}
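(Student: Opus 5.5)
The plan is a union bound over false hypotheses, followed by a Gaussian tail (Chernoff) bound on each cumulative log-likelihood ratio using Lemma~\ref{lem:llr_gauss}.

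\textbf{Step 1: reduce MAP error to pairwise events.} If $\hat{j}_k\neq j^\star$, then some $j\neq j^\star$ satisfies $\pi_k(j)\ge\pi_k(j^\star)$; this is where ties broken against $j^\star$ are absorbed. With the uniform prior, the posterior-odds identity stated before the theorem gives
\[
\log\big(\pi_k(j^\star)/\pi_k(j)\big)=S_k(j),\qquad S_k(j):=\sum_{t=0}^{k-1}\Lambda_t(j),
\]
because the prior log-ratio vanishes. Hence $\{\hat{j}_k\neq j^\star\}\subseteq\bigcup_{j\neq j^\star}\{S_k(j)\le 0\}$, and a union bound leaves us to show
\[
\Pr\big(S_k(j)\le 0\mid\{x_t,\mathcal{Q}_t\}_{t<k}\big)\le \exp\!\big(-\tfrac14\Gamma_k(j)\big)\quad\text{for each } j\neq j^\star.
\]

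\textbf{Step 2: Chernoff bound on $S_k(j)$.} By \eqref{eq:sum_llr_mom}, conditionally on the trajectory and query history, $S_k(j)\sim\mathcal{N}(\Gamma,2\Gamma)$ with $\Gamma=\Gamma_k(j)$. If $\Gamma=0$, the right-hand side equals $1$ and the bound is trivial. If $\Gamma>0$, for any $\lambda>0$,
\[
\Pr(S\le 0)\le \mathbb{E}\big[e^{-\lambda S}\big]=\exp\!\big(-\lambda\Gamma+\lambda^2\Gamma\big).
\]
Choosing $\lambda=1/2$ gives $\exp(-\Gamma/4)$, which is exactly the claimed exponent. Equivalently, the standard Gaussian tail bound $\Pr(Z\ge a)\le e^{-a^2/2}$ with $a=\Gamma/\sqrt{2\Gamma}=\sqrt{\Gamma/2}$ gives the same result. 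Summing over $j\neq j^\star$ then yields \eqref{eq:map_err_bound}.

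\textbf{Main obstacle: the conditioning.} Lemma~\ref{lem:llr_gauss} treats $\{x_t,\mathcal{Q}_t\}$ as fixed, but in closed loop the states and queries depend on past measurements, so the sum is not literally Gaussian given the history. I would first read the conditioning as the paper intends, namely a fixed (open-loop) sequence, where Assumption~\ref{as:time_indep} applies directly.

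For the adaptive case, I would instead use a supermartingale argument. Conditionally on $\mathcal{F}_t$ (which fixes $x_t$ and $\mathcal{Q}_t$), \eqref{eq:llr_mom} gives
\[
\mathbb{E}\big[e^{-\Lambda_t/2}\mid\mathcal{F}_t\big]=e^{-\Delta_t/4}.
\]
Therefore $\exp\big(-S_k/2+\Gamma_k/4\big)$ is a martingale with unit mean. Markov's inequality then gives the same bound whenever $\Gamma_k$ is determined by the conditioning history.
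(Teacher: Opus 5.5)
Your argument is correct and is essentially the paper's proof: reduce the MAP error to the events $\{S_k(j)\le 0\}$ via the uniform-prior posterior odds, bound each by $\mathbb{E}[e^{-S_k(j)/2}]=e^{-\Gamma_k(j)/4}$ using the Gaussian law from Lemma~\ref{lem:llr_gauss}, and take a union bound over $j\neq j^\star$. One caution on your optional closed-loop remark: the unit-mean martingale $\exp(-S_k/2+\Gamma_k/4)$ with Markov's inequality gives only the unconditional bound $\Pr\big(S_k(j)\le 0,\ \Gamma_k(j)\ge\gamma\big)\le e^{-\gamma/4}$ for deterministic $\gamma$. It does not give the bound conditional on an adaptively generated history $\{x_t,\mathcal{Q}_t\}_{t<k}$, because conditioning on measurement-dependent actions can distort the law of the past $\Lambda_t$; your main proof is therefore right to rest, like the paper's, on treating that history as fixed.
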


\begin{proof}
Under a uniform prior, $\{\hat{j}_k \neq j^\star\}$ implies that there exists $j \neq j^\star$ such that $\log(\pi_k(j^\star)/\pi_k(j))\le 0$. By Bayes' rule, this implies $S_k(j):=\sum_{t=0}^{k-1}\Lambda_t(j)\le 0$.
Given $\{(x_t,\mathcal{Q}_t)\}_{t<k}$, Lemma~\ref{lem:llr_gauss} gives $S_k(j)\sim\mathcal{N}(\Gamma_k(j),2\Gamma_k(j))$, hence
$
\Pr\big(S_k(j)\le 0\mid \cdot\big)
\le \mathbb{E} \big[e^{-S_k(j)/2}\mid \cdot\big]
= e^{-\Gamma_k(j)/4}$.
A union bound over $j\neq j^\star$ yields~\eqref{eq:map_err_bound}.
\end{proof}

The bounds above are trajectory and policy dependent through $\Gamma_k(j)$. Information-augmented DDTO biases the trajectory toward smaller distances to candidates, which can increase the per-step information increments through $\psi(d)$ in~\eqref{eq:psi_def}. We also note that indirect evidence from querying false candidates is already captured through the second term in~\eqref{eq:Delta_def}, and it further accelerates concentration in practice.

\subsection{Coverage and pruning (Distribution-free model)}
\label{subsec:cp_theory}

We provide the fixed-time coverage guarantee for the raw conformal set $\mathcal C_k(\alpha)$.

\begin{lemma}
\label{lem:cp_single_valid}
Fix a candidate $j\in\mathcal{J}$ and a time $k$ such that $j\in\mathcal{Q}_k$ and $d_k(j)\in B_b$. Under Assumption~\ref{ass:exchangeability_clean}, conditioned on $(x_k,\mathcal{Q}_k)$ and on the calibration multiset $\mathcal{E}_b$, the conformal $p$-value $p_k(j)$ in~\eqref{eq:cp_pvalue_clean} is super-uniform:
\[
\Pr\!\big(p_k(j)\le u \,\big|\, x_k,\mathcal{Q}_k, j^\star=j\big)\le u,\qquad \forall u\in[0,1].
\]
\end{lemma}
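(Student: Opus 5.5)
The plan is the standard rank argument for split conformal $p$-values. Fix $j$, $k$, and the bin $b$ with $d_k(j)\in B_b$, and work on the event $j^\star=j$. Write $S_0:=s(m_k^j)$ and let $S_1,\dots,S_{n_b}$ be the elements of $\mathcal{E}_b$. By Assumption~\ref{ass:exchangeability_clean}, the vector $(S_0,S_1,\dots,S_{n_b})$ is exchangeable once we condition on $(x_k,\mathcal{Q}_k)$ and $j^\star=j$. Here I use that the bin membership $d_k(j)\in B_b$ is determined by $x_k$, so this conditioning does not break exchangeability. In this notation \eqref{eq:cp_pvalue_clean} reads
\[
p_k(j)=\frac{1+\#\{i\ge 1: S_0\le S_i\}}{n_b+1}
=\frac{\#\{i\in\{0,\dots,n_b\}: S_i\ge S_0\}}{n_b+1}.
\]
So $p_k(j)$ is, up to normalization, the number of scores among all $n_b+1$ that are at least as large as the test score $S_0$.

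The key step is to condition on the unordered multiset $\mathcal{S}:=\{S_0,\dots,S_{n_b}\}$. Given $\mathcal{S}$, exchangeability implies that $S_0$ is equally likely to occupy each of the $n_b+1$ positions. Let $R:=\#\{i: S_i\ge S_0\}$, which is one plus the number of other scores at least as large as $S_0$. I will show that $\Pr(R\le m\mid\mathcal{S})\le m/(n_b+1)$ for every integer $m$. Without ties, $R$ is exactly uniform on $\{1,\dots,n_b+1\}$. With ties, a tied group containing $S_0$ receives the largest rank in the group, which only increases $R$. The clean way to see this is to break ties with independent uniform labels: the tie-broken rank $R'$ is exactly uniform and satisfies $R\ge R'$. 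Setting $m=\lfloor u(n_b+1)\rfloor$ then gives
\[
\Pr(p_k(j)\le u\mid\mathcal{S})\le \frac{\lfloor u(n_b+1)\rfloor}{n_b+1}\le u .
\]
Taking expectation over $\mathcal{S}$, still conditional on $(x_k,\mathcal{Q}_k)$ and $j^\star=j$, yields the claimed super-uniformity.

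The main obstacle is the phrase ``conditioned on the calibration multiset $\mathcal{E}_b$'' in the statement. Conditioning on the calibration scores alone, rather than on the full multiset $\mathcal{S}$ that includes the test score, does not in general give a super-uniform $p$-value. For a fixed $\mathcal{E}_b$ the conditional law of $S_0$ need not be compatible with the ranks of the calibration points. I would therefore handle this in one of two ways:
\begin{itemize}
\item interpret the conditioning as conditioning on the pooled multiset $\mathcal{S}$, as above, which in particular fixes the calibration values as a set up to the position of $S_0$; or
\item state validity marginally over the calibration draw, which is what the Fisher combination in Theorem~\ref{thm:cp_coverage} actually needs. The product argument there requires super-uniformity of each $p_t(j)$ given the past, and that follows from the $\mathcal{S}$-conditional bound together with Assumption~\ref{as:time_indep}.
\end{itemize}
The remaining steps (the rank identity, tie-breaking, and the floor bound) are routine.
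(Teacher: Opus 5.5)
Your proof is correct and is essentially the paper's argument: exchangeability of $(S,E_1,\dots,E_{n_b})$ controls the rank $R$, so $\Pr(R\le u(n_b+1))\le u$. You are more careful than the paper on two points, and both are justified: the paper asserts that $R$ is exactly uniform, which fails under ties and is fixed by your coupling $R\ge R'$, and its proof only establishes the displayed statement marginally over the calibration draw, so your objection to conditioning on $\mathcal{E}_b$ alone is well founded.

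One caution on the aside in your second bullet, which concerns Theorem~\ref{thm:cp_coverage} rather than this lemma. When the same $\mathcal{E}_b$ is reused at several query times, earlier $p$-values carry information about $\mathcal{E}_b$, so $p_t(j)$ is not super-uniform given the past, and this does not follow from Assumption~\ref{as:time_indep}. For example, with $n_b=1$ and i.i.d.\ continuous scores, $\Pr(p_2\le\tfrac12\mid p_1=\tfrac12)=\tfrac23$.
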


\begin{proof}
Let $S:=s(m_k^j)$ and $\mathcal{E}_b=\{E_1,\dots,E_{n_b}\}$. With $R:=1+\sum_{i=1}^{n_b}\mathbf{1}\{S\le E_i\}$, we have $p_k(j)=R/(n_b+1)$. Under Assumption~\ref{ass:exchangeability_clean}, $\{S,E_1,\dots,E_{n_b}\}$ is exchangeable, so $R$ is uniform on $\{1,\dots,n_b+1\}$. Hence, $\Pr\big(p_k(j)\le u\mid x_k,\mathcal{Q}_k,j^\star=j\big)
=\Pr\big(R\le u(n_b{+}1)\big)\le u$.
\end{proof}

The next lemma shows that Fisher's combination preserves validity when applied to independent super-uniform $p$-values.

\begin{lemma}
\label{lem:fisher_valid}
Let $p_1,\dots,p_n$ be independent random variables satisfying $\Pr(p_i\le u)\le u$ for all $u\in[0,1]$. 
The combined $p$-value $P:=1-F_{\chi^2_{2n}}(T)$ with $T:=-2\sum_{i=1}^n \log p_i$ is super-uniform
$\Pr(P\le \alpha)\le \alpha$, for all $\alpha \in [0,1]$.
\end{lemma}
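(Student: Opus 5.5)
The plan is a stochastic-dominance (coupling) argument that reduces to the exactly uniform case. If each $p_i$ were exactly $\mathrm{Uniform}[0,1]$, then $-2\log p_i\sim\chi^2_2$: indeed $\Pr(-2\log p_i> t)=\Pr(p_i<e^{-t/2})=e^{-t/2}$, the $\chi^2_2$ (Exponential with mean $2$) tail. By independence, $T\sim\chi^2_{2n}$, since $\chi^2$ degrees of freedom add. Then $P=1-F_{\chi^2_{2n}}(T)$ is exactly uniform by the probability integral transform, because $F_{\chi^2_{2n}}$ is continuous and strictly increasing on $(0,\infty)$. So the exact case gives $\Pr(P\le\alpha)=\alpha$.

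For super-uniform $p_i$, I will show that $T$ is stochastically dominated by a $\chi^2_{2n}$ variable.

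First, construct a coupling. Let $U_i:=G_i(p_i)$ with randomization at atoms, where $G_i$ is the CDF of $p_i$. Equivalently, super-uniformity says $p_i$ stochastically dominates $U_i\sim\mathrm{Unif}[0,1]$, so on a common probability space we can take independent uniforms $U_i$ with $U_i\le p_i$ almost surely. For example, set $p_i':=G_i^{-1}(V_i)$ with $V_i$ i.i.d.\ uniform; this has the same law as $p_i$. Then $G_i^{-1}(v)\ge v$ follows from $G_i(u)\le u$. The joint law of $(p_1',\dots,p_n')$ matches that of $(p_1,\dots,p_n)$ by independence, so it suffices to work with $p_i'$ and $U_i:=V_i$.

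Next, use monotonicity. Since $-\log$ is decreasing, $T'=-2\sum\log p_i'\le -2\sum\log V_i=:T_0\sim\chi^2_{2n}$ pointwise. Since $1-F_{\chi^2_{2n}}$ is nonincreasing, $P'\ge 1-F_{\chi^2_{2n}}(T_0)=:P_0$, which is exactly uniform. Hence $\Pr(P\le\alpha)=\Pr(P'\le\alpha)\le\Pr(P_0\le\alpha)=\alpha$. Edge cases are handled separately: $p_i=0$ has probability $0$ by super-uniformity at $u=0$, and $\alpha\in\{0,1\}$ is trivial.

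The main obstacle is the coupling step, specifically verifying $G_i^{-1}(v)\ge v$ for the generalized inverse $G_i^{-1}(v)=\inf\{u: G_i(u)\ge v\}$. If $u<v$, then $G_i(u)\le u<v$, so every $u$ in the set satisfies $u\ge v$, and the infimum is at least $v$. Once this is in place, the remaining steps are routine monotonicity and $\chi^2$ facts.
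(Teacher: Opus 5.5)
Your proof is correct and takes the same route as the paper: each $-2\log p_i$ is stochastically dominated by $\chi^2_2$, hence $T$ is dominated by $\chi^2_{2n}$, and monotonicity of $1-F_{\chi^2_{2n}}$ finishes the argument. Your quantile coupling $p_i'=G_i^{-1}(V_i)\ge V_i$ spells out the step the paper only summarizes as ``by independence, $T$ is stochastically dominated by $\chi^2_{2n}$,'' and your pointwise comparison $P'\ge P_0$ with $P_0$ exactly uniform replaces the paper's computation through the quantile $\Pr(T\ge q_{1-\alpha})$.
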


\begin{proof}
Fix any $t\ge 0$. Let $U\sim\mathrm{Unif}[0,1]$. Then $\Pr(-2\log U\ge t)=e^{-t/2}$, and $-2\log U\sim\chi^2_2$. Hence, $-2\log p_i$ is stochastically dominated by $\chi^2_2$. By independence, $T$ is stochastically dominated by $\chi^2_{2n}$. Now let $q_{1-\alpha}$ denote the $(1-\alpha)$-quantile of $\chi^2_{2n}$. Then
\[
\Pr(P\le \alpha)=\Pr\!\big(1-F_{\chi^2_{2n}}(T)\le \alpha\big)
=\Pr(T\ge q_{1-\alpha}).
\]
Hence, $\Pr(P\le \alpha) \le \Pr(\chi^2_{2n}\ge q_{1-\alpha})
=\alpha$.
\end{proof}
The nested update $\widehat{\mathcal{C}}_s(\alpha)$ preserves the set-inclusion structure needed for recursive feasibility by preventing the set from expanding to include previously rejected candidates. While this nested update may increase conservativeness, Theorem~\ref{thm:cp_coverage} guarantees that $j^\star$ remains in $\mathcal{C}_k(\alpha)$ with high probability.
\begin{theorem}
\label{thm:cp_coverage}
Assume Assumptions~\ref{as:time_indep} and~\ref{ass:exchangeability_clean}. Fix any time $k\ge 0$. Conditioned on the realized trajectory and query sets
$\{(x_t,\mathcal{Q}_t)\}_{t=0}^k$ (and on the calibration sets $\{\mathcal{E}_b\}$), the raw conformal set $\mathcal{C}_k(\alpha)$ in \eqref{eq:confset_clean} satisfies
\[
\Pr\!\big(j^\star\in \mathcal{C}_k(\alpha)\,\big|\,\{(x_t,\mathcal{Q}_t)\}_{t=0}^k\big) \ge 1-\alpha.
\]
\end{theorem}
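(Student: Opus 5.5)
The plan is to reduce the coverage statement to the super-uniformity of a single combined $p$-value, namely $P_k(j^\star)$, and then chain Lemma~\ref{lem:cp_single_valid} with Lemma~\ref{lem:fisher_valid}. Observe first that $j^\star\notin\mathcal{C}_k(\alpha)$ if and only if $P_k(j^\star)<\alpha$. It therefore suffices to show
\[
\Pr\big(P_k(j^\star)<\alpha \,\big|\, \{(x_t,\mathcal{Q}_t)\}_{t=0}^k\big)\le \alpha .
\]
We condition on $j^\star=j$ for an arbitrary fixed $j$; the unconditional statement follows by averaging over $j$. If $n_k(j)=0$, then $P_k(j)=1\ge\alpha$ and there is nothing to prove. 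Otherwise, let $t_1<\dots<t_n$ with $n=n_k(j)$ be the times $t\le k$ at which $j\in\mathcal{Q}_t$. These times are deterministic once we condition on the query sets.

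Next, for each $t_i$, let $b_i$ be the bin containing $d_{t_i}(j)$; this bin is fixed given $x_{t_i}$. Applying Lemma~\ref{lem:cp_single_valid} at time $t_i$, with the calibration set $\mathcal{E}_{b_i}$ held fixed, shows that $p_{t_i}(j)$ is super-uniform conditionally on $(x_{t_i},\mathcal{Q}_{t_i})$ and $j^\star=j$. We also need this super-uniformity to hold given the whole conditioning collection $\{(x_t,\mathcal{Q}_t)\}_{t\le k}$. This follows from Assumption~\ref{as:time_indep}: given $j^\star$ and the full trajectory and query sequence, the measurement at $t_i$ is independent of the other times, so its conditional law depends only on $(x_{t_i},\mathcal{Q}_{t_i},j^\star)$.

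The same assumption, together with fixing the calibration sets, makes $p_{t_1}(j),\dots,p_{t_n}(j)$ conditionally independent. Each $p$-value is a deterministic function of $m_{t_i}^j$ and of the fixed $\mathcal{E}_{b_i}$, so independence of the measurements carries over. Lemma~\ref{lem:fisher_valid} then applies with $n=n_k(j)$ and gives $\Pr(P_k(j)\le\alpha\mid\cdot)\le\alpha$. Since $\{P_k(j)<\alpha\}\subseteq\{P_k(j)\le\alpha\}$, this yields the claim.

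The main obstacle is the conditioning step. The trajectory and query sets are chosen adaptively from past $p$-values, so conditioning on them could in principle distort the law of earlier measurements. Assumption~\ref{as:time_indep} is stated conditionally on the entire trajectory and query sequence, and I would lean on it directly to sidestep this selection issue. A second subtlety is that the same bin's calibration set may be reused at several times, which couples the $p$-values through shared calibration scores. Conditioning on $\{\mathcal{E}_b\}$, as the theorem statement does, removes this coupling. The one point to check is that Assumption~\ref{ass:exchangeability_clean} still yields the conditional super-uniformity asserted in Lemma~\ref{lem:cp_single_valid} once the calibration multiset is fixed. I would take that lemma as given, as the paper states it.
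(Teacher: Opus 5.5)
Your proposal is correct and follows the paper's proof. Both arguments get super-uniformity of each $p_t(j^\star)$ at the query times from Lemma~\ref{lem:cp_single_valid}, conditional independence across time from Assumption~\ref{as:time_indep} with the calibration sets fixed, and then conclude with Lemma~\ref{lem:fisher_valid}. Your treatment of $n_k(j)=0$ and of $\{P_k<\alpha\}\subseteq\{P_k\le\alpha\}$ only makes explicit what the paper leaves implicit.

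One caveat concerns your justification for conditioning on the whole sequence. Assumption~\ref{as:time_indep} gives independence across time only. It does not by itself imply that the law of $m_{t_i}^{j}$ given $\{(x_t,\mathcal{Q}_t)\}_{t\le k}$ depends only on $(x_{t_i},\mathcal{Q}_{t_i},j^\star)$. The paper's proof uses this same extension of Lemma~\ref{lem:cp_single_valid} without comment, so it is a shared implicit assumption rather than a flaw specific to your argument.
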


\begin{proof}
Let $\mathcal{K}_k(j^\star):=\{t\le k:\, j^\star\in\mathcal{Q}_t\}$ with $|\mathcal{K}_k(j^\star)|=n_k(j^\star)$. Conditioned on $\{(x_t,\mathcal{Q}_t)\}_{t=0}^k$, each $t\in\mathcal{K}_k(j^\star)$ fixes a bin, and Lemma~\ref{lem:cp_single_valid} implies $p_t(j^\star)$ is super-uniform. Assumption~\ref{as:time_indep} implies these $p$-values are independent across $t$ given the same information. Lemma~\ref{lem:fisher_valid} completes the proof.
\end{proof}

Theorem~\ref{thm:cp_coverage} applies to the raw conformal set $\mathcal{C}_k(\alpha)$. The guarantee is not invalidated if the nested backup was used at earlier times; it is stated conditionally on the realized trajectory and query sets. However, the same $1-\alpha$ fixed-time coverage guarantee does not automatically apply to the nested backup set $\widehat{\mathcal{C}}_k(\alpha)$. Since $\widehat{\mathcal{C}}_k(\alpha)$ is an intersection of current and past raw sets, it can exclude the true target if any earlier raw set excluded it. Thus, coverage is claimed for $\mathcal{C}_k(\alpha)$, while $\widehat{\mathcal{C}}_k(\alpha)$ is used only to preserve recursive feasibility.

\begin{figure}[t]
\centering
\includegraphics[width=0.9\columnwidth, keepaspectratio, trim={20 20 0 40}, clip]{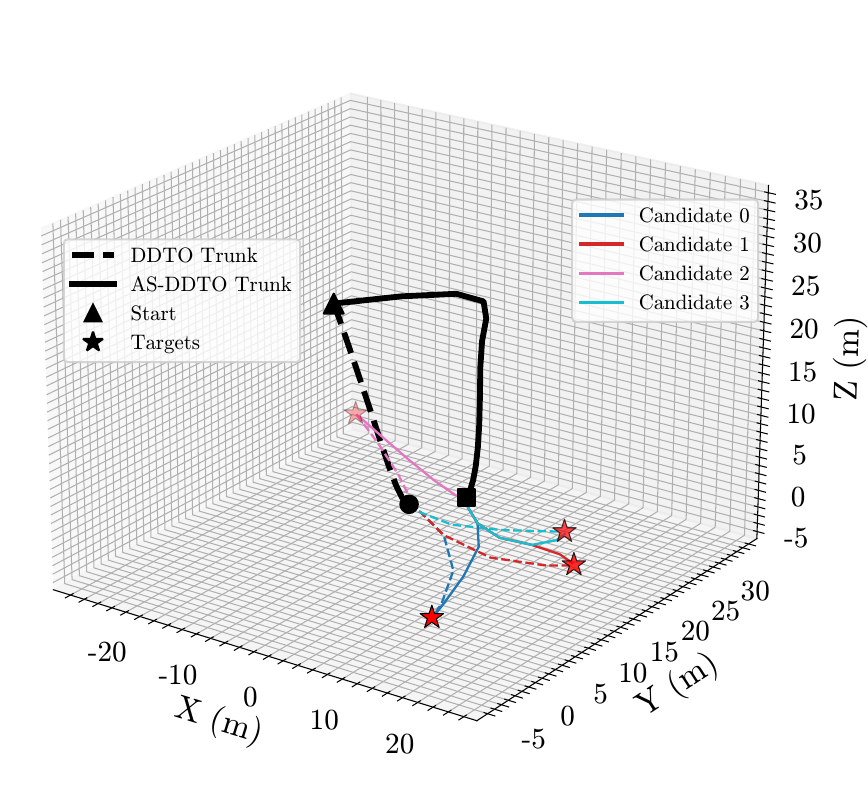}
\caption{Trajectories obtained by applying DDTO and AS-DDTO to the quadrotor system. DDTO maximizes the coincident trajectory duration; AS-DDTO can trade deferral duration for earlier informative motion.}
\label{fig:paths-3d}
\end{figure}

\section{Experiments}
\label{sec:experiments}

We compare DDTO ($w=0$) and AS-DDTO ($w=40$) in offline and shrinking-horizon online variants.
Simulations use the affine quadrotor model and constraints of~\cite{Elango2022DDTO}, with $n_x=6$, $n_u=3$, and randomized candidate locations $(x,y)\in[-30,30]^2$, $z=0$. We use the parameter values $N=4$, $H=20$, $w=40$, $\rho=0.9$, $D_{\max}=100$, $M=800$, and $c_{\max}=3400$ throughout.
All instances were solved with MOSEK on an AMD EPYC 7543 processor. We report the results for convex distance surrogates $\phi(d)=d^2$ and $\phi(d)=d$, denoted as AS-DDTO($d^2$) and AS-DDTO($d$). For the Gaussian model, we also evaluate AS-DDTO with the inverse-SNR surrogate $\phi(d)=\psi(0)/\psi(d)-1$, implemented via a piecewise-linear epigraph.
For belief-based pruning~\eqref{eq:prune_tau}, we use $\tau=0.2$ and maintain the nested sets.
Rates are reported in percent; all other quantities are reported as mean $\pm$ standard deviation over $100$ runs.

\paragraph{Gaussian sensing}
We use the values $\xi=0.05$, $\sigma_0=0.5$, $\sigma_1=b=1.0$, and $L=1$. 
Thus, close-range measurements are more informative, while distant measurements are noisier.
Figure~\ref{fig:paths-3d} shows that AS-DDTO trades deferral for earlier informative motion. The forced decision time $t_f$ is the first step at which trajectories diverge over the retained set.
At $t_f$, the planner commits to the branch associated with the MAP candidate in the Gaussian case, and to the branch associated with $\arg\max_j P_{t_f}(j)$ in the CP case, provided that candidate is in the retained set. Otherwise, we continue along the shared trunk and keep updating the belief/candidate set online.

Table~\ref{tab:gaussian-results} reports offline and online performance through $t_f$, target interception rate (reaching the target by the horizon), identification rate defined as $\1 \{\arg\max_j \pi_{t_f}(j)=j^\star\}$, target posterior $\pi_{t_f}(j^\star)$, and the minimum accumulated pairwise information $\Gamma^{\min}_{t_f}:=\min_{j\neq j^\star}\sum_{t=0}^{t_f-1}\Delta_t(j)$.
AS-DDTO improves online interception and ID over DDTO, with AS-DDTO($d^2$) having both the best online interception and online ID rate.
Offline AS-DDTO also improves over DDTO, with AS-DDTO($d$) and the inverse-SNR surrogate strongest in offline interception and ID.
The inverse-SNR surrogate accumulates the most pairwise information but forces premature branching, leaving insufficient horizon for the belief to concentrate before commitment.
In raw retained-set cases, the nested backup was required rarely for the AS-DDTO variants, in $0.36$--$0.56\%$ of solve attempts. Branch-triggered nested replanning occurred only $0.97$ times per run, which lends credence to the hypothesis that the gains are not driven by the raw-set planning fallback.
In the Gaussian offline case, mean times for solving the problem were $8.22$s for DDTO, $17.90$s for AS-DDTO($d^2$), and $16.85$s for AS-DDTO($d$).

\begin{table}[t]
\centering
\caption{Results for the Gaussian sensing model over $100$ runs.}
\vspace{-2mm}
\label{tab:gaussian-results}
\resizebox{0.97\columnwidth}{!}{%
\begin{tabular}{lccccc}
\toprule
Method
& $t_f$ & Intercept & ID rate & $\pi_{t_f}(j^\star)$ & $\Gamma^{\min}_{t_f}$ \\
\midrule
\multicolumn{6}{c}{Offline (single-solve)}\\
DDTO
& 14.75$\pm$0.64 & 35\% & 35\% & 0.29$\pm$0.11 & 0.15$\pm$0.15 \\
AS-DDTO ($d^2$)
& 14.62$\pm$1.52 & 37\% & 37\% & \textbf{0.31$\pm$0.11} & \textbf{0.24$\pm$0.23} \\
AS-DDTO ($d$)
& 14.72$\pm$0.62 & \textbf{44\%} & \textbf{44\%} & \textbf{0.31$\pm$0.12} & \textbf{0.24$\pm$0.23} \\
AS-DDTO (inv-SNR)
& 14.28$\pm$2.51 & \textbf{44\%} & \textbf{44\%} & \textbf{0.31$\pm$0.11} & 0.21$\pm$0.19 \\
\midrule
\multicolumn{6}{c}{Online (replanning)}\\
DDTO
& 14.78$\pm$0.66 & 33\% & 33\% & 0.28$\pm$0.11 & 0.82$\pm$1.12 \\
AS-DDTO ($d^2$)
& 14.70$\pm$0.65 & \textbf{53\%} & \textbf{58\%} & \textbf{0.34$\pm$0.12} & 2.12$\pm$1.93 \\
AS-DDTO ($d$)
& 14.18$\pm$0.87 & 48\% & 55\% & 0.33$\pm$0.16 & 3.29$\pm$3.65 \\
AS-DDTO (inv-SNR)
& 7.85$\pm$4.35 & 45\% & 38\% & 0.27$\pm$0.08 & \textbf{3.98$\pm$5.66} \\
\bottomrule
\end{tabular}%
}
\vspace{-4mm}
\end{table}

\paragraph{Conformal prediction}
We evaluate the distribution-free setting with $\alpha=0.1$, $L=2$, and nested set $\widehat{\mathcal{C}}_t(\alpha)$. Each $16$-dimensional measurement is processed by an MLP trained offline. We use the score function $s(m)=1-g_\theta(m),$
discretize $[0,D_{\max}]$ into $B=8$ bins and collect target-only calibration scores $\mathcal{E}_b$ per bin with $n_b=6000$. At each step, $p$-values from \eqref{eq:cp_pvalue_clean} are combined by Fisher's method.
Table~\ref{tab:cp-results} reports $t_f$, interception, nested retained-set size, and target combined $p$-value at $t_f$.
Offline mean solve times were $8.01$s for DDTO, $17.44$s for AS-DDTO($d^2$), and $16.41$s for AS-DDTO($d$).

Online AS-DDTO improves interception over DDTO (61\% vs. 53\%) and retains fewer candidates at $t_f$. In the offline setting, AS-DDTO improves interception over DDTO, while the online setting provides the clearest gains, highlighting the benefit of coupling information-aware trajectory shaping with online replanning.
A sensitivity check with $w=20$ and $\rho=1$ gave similar interception rates ($62\%$ and $60\%$, respectively), indicating that the improvement is not tied to the exact sensing weight or discount factor. A more conservative retained set with $\alpha=0.05$ increased interception to $71\%$, while increasing the query budget to $L=3$ gave $66\%$ interception.

\section{Conclusion}
\label{sec:conclude}
This paper integrates active sensing into DDTO by embedding distance-dependent sensing models and an information surrogate in the DDTO objective. The resulting framework retains the benefit of delayed commitment while biasing the system toward trajectories expected to collect more informative measurements earlier, thereby improving early target identification. The approach supports Bayesian and conformal candidate-set updates, with guarantees on recursive feasibility, posterior concentration, and fixed-time conformal coverage for the raw set. Simulations show improved target identification and interception over DDTO.

\begin{table}[t]
\centering
\caption{Results for the conformal prediction model over $100$ runs.}
\label{tab:cp-results}
\vspace{-2mm}
\resizebox{0.97\columnwidth}{!}{%
\begin{tabular}{lcccc}
\toprule
Method & $t_f$ & Intercept & $|\widehat{\mathcal{C}}_{t_f}|$ & $P_{t_f}(j^\star)$ \\
\midrule
\multicolumn{5}{c}{Offline (single-solve)}\\
DDTO & 10.48$\pm$4.38 & 39\% & 1.52$\pm$0.93 & \textbf{0.40$\pm$0.32} \\
AS-DDTO ($d^2$) & 9.89$\pm$4.37 & 48\% & 1.46$\pm$0.89 & \textbf{0.40$\pm$0.32} \\
AS-DDTO ($d$) & 9.74$\pm$4.32 & \textbf{50\%} & \textbf{1.41$\pm$0.87} & \textbf{0.40$\pm$0.32} \\
\midrule
\multicolumn{5}{c}{Online (replanning)}\\
DDTO
& 10.43$\pm$4.98 & 53\% & 1.24$\pm$0.74 & 0.37$\pm$0.33 \\
AS-DDTO ($d^2$)
& 8.81$\pm$4.28 & \textbf{61\%} & 1.06$\pm$0.24 & \textbf{0.39$\pm$0.32}\\
AS-DDTO ($d$)
& 8.40$\pm$4.00 & 58\% & \textbf{1.03$\pm$0.16} & 0.38$\pm$0.32 \\
\bottomrule
\end{tabular}%
}
\vspace{-7mm}
\end{table}
 
\bibliographystyle{IEEEtran}
\bibliography{Ref}

\end{document}